\def\BibTeX{{\rm B\kern-.05em{\sc i\kern-.025em b}\kern-.08em
    T\kern-.1667em\lower.7ex\hbox{E}\kern-.125emX}}
\newtheorem{theorem}{Theorem}
\newtheorem{lemma}{Lemma}
\newtheorem{definition}{Definition}
\newtheorem{property}{Property}
\newtheorem{remark}[theorem]{Remark}
\newcommand{\R}{\ensuremath{\mathbb{R}}}
\newcommand{\p}[1]{\bold{#1}}
\newcommand{\eps}{\epsilon}
\newcommand{\cala}{\mathcal{A}}
\newcommand{\calr}{\mathcal{R}}
\newcommand{\cost}{{\rm cost}}
\begin{document}

\title{Reliable Distributed Clustering with Redundant Data Assignment}

\author{Venkata Gandikota$^\dagger$, Arya Mazumdar$^\dagger$, and Ankit Singh Rawat$^\ddagger$\thanks{Arya Mazumdar and Venkata Gandikota's research is supported by NSF awards CCF 1642658, CCF 1618512, CCF 1909046 and CCF 1934846}}
\affil{\normalsize $^\dagger$College of Information and Computer Sciences, University of Massachusetts, Amherst, MA\\ $^\ddagger$Google Research New York, NY\\
  \texttt{gandikota.venkata@gmail.com, arya@cs.umass.edu, ankitsrawat@google.com}.}

\sloppy

\maketitle

%%%%%%%%%%%%%%%%%%%%%%%%%%%%%%%%%%%%%%%%%%%%%%%%%%%%%%%%%%%%%%%%%%%%%%%%%%
%% Abstract
%%%%%%%%%%%%%%%%%%%%%%%%%%%%%%%%%%%%%%%%%%%%%%%%%%%%%%%%%%%%%%%%%%%%%%%%%%%
\begin{abstract}
In this paper, we present distributed generalized clustering algorithms that can handle large scale data across multiple machines in spite of straggling or unreliable machines. We propose a novel data assignment scheme that enables us to obtain global information about the entire data even when some machines fail to respond with the results of the assigned local computations. The assignment scheme leads to distributed algorithms with good approximation guarantees for a variety of clustering and dimensionality reduction problems.\\
\end{abstract}

%%%%%%%%%%%%%%%%%%%%%%%%%%%%%%%%%%%%%%%%%%%%%%%%%%%%%%%%%%%%%%%%%%%%%%%%%%
%% Introduction
%%%%%%%%%%%%%%%%%%%%%%%%%%%%%%%%%%%%%%%%%%%%%%%%%%%%%%%%%%%%%%%%%%%%%%%%%%%

\section{Introduction}

%clustering
Clustering is one of the most basic unsupervised learning tools developed to infer informative patterns in data. Given a set of data points in $\R^d$, the goal in clustering problems is to find a smaller number of points, namely {cluster centers}, that form a good representation of the entire dataset. The quality of the clusters is usually measured using a cost function, which is the sum of the distances of the individual points to their closest cluster center.
 
%___________________________________________________
%distributed
With the constantly growing size of datasets in various domains, centralized clustering algorithms are no longer desirable and/or feasible, which highlights a need to design efficient distributed algorithms for the clustering task. In a distributed setting, we assume that a collection of data points $P$ is too large to fit in the memory of a single server. Therefore, we employ a setup with $s$ {\em compute nodes} and one {\em coordinator server}. In this setup, the most natural approach is to partition the data point in $P$ into $s$ subsets $\{ P_1,\ldots, P_s \} \subset P$ and assign each of these parts to a different compute node. These nodes then perform local computation on the data points assigned to them and communicate their results to the coordinator. The coordinator then combines all the local computation received from the compute nodes and outputs the final clustering. Here, we note that the overall computation of clustering may potentially involve multiple rounds of communications between the compute nodes and the coordinator. This natural approach for distributed clustering has received significant attention from the research community. Interestingly, it is possible to obtain a clustering in the distributed setup that has its cost bounded by a constant multiple of the cost of clustering achievable by a centralized algorithm (see \cite{GMMO00, ABW17a, MKCWM15} and references therein).

%___________________________________________________
%stragglers
In this paper, we aim to address the issue of stragglers that arises in the context of large scale distributed computing systems, especially the ones running on the so-called cloud. The stragglers correspond to those compute nodes that take significantly more time than expected (or fail) to complete and deliver the computation assigned to them. Various system-related issues lead to this behavior, including unexpected background tasks such as software updates being performed on the compute nodes, power outages, and congested communication networks in the computing setup. Some simple approaches used to handle stragglers include ignoring them and relying on asynchronous methods. The loss of information arising due to the straggler nodes can be traded for efficiency for specific tasks such as computing distributed gradients \cite{LLPPR15, TLDK17, dutta18a}. However, with the existing methods for unsupervised learning tasks such as clustering or dimensionality reduction, ignoring the stragglers can lead to extremely poor quality solutions.

Alternatively, one can distribute data to the compute nodes in a redundant manner such that the information obtained from the non-straggler nodes is sufficient to compute the desired function on the entire dataset. Following this approach, multiple coding based solutions  (mainly focusing on the linear computation and first-order methods for optimization) have been recently proposed (e.g., see \cite{LLPPR15, TLDK17, KarakusSDY17, dutta18a, Yu18a, DuttaCG16}).

%___________________________________________________
%our contribution
This paper focuses on the relatively unexplored area of designing straggler-resilient unsupervised learning methods for distributed computing setups. We study a general class of distributed clustering problems in the presence of stragglers. In particular, we consider the $k$-medians clustering problem (cf.~Section~\ref{subsec:kmeds}) and the \emph{subspace clustering} problem (cf.~Section~\ref{subsec:pca}). Note that the subspace clustering problem covers both the $k$-means and the principal component analysis (PCA) problems as special cases. The proposed Straggler-resilient clustering methods in this paper rely on a redundant data distribution scheme that allows us to compute provably good-quality cluster centers even in the presence of a large number of straggling nodes (cf.~Section~\ref{subsec:assignment} and \ref{subsec:construction}). In Section~\ref{sec:experiments}, we empirically evaluate our proposed solution for $k$-median clustering and demonstrate its utility.

%%%%%%%%%%%%%%%%%%%%%%%%%%%%%%%%%%%%%%%%%%%%%%%%%%%%%%%%%%%%%%%%%%%%%%%%%%
%% Background
%%%%%%%%%%%%%%%%%%%%%%%%%%%%%%%%%%%%%%%%%%%%%%%%%%%%%%%%%%%%%%%%%%%%%%%%%%%

\section{Background}\label{sec:background}
In this section, we first provide the necessary background on the clustering problems studied in the paper. We then formally state the main objective of this work.

\subsection{Distributed clustering}
Given a dataset with $n$ points $P = \{\p{p}_1,\ldots, \p{p}_n\} \subseteq \R^d$, distributed among $s$ compute nodes, the goal in distributed clustering problems is to find a set of $k$ cluster centers $C = \{ \p{c_1}, \ldots, \p{c_k} \} \subset \R^d$ that closely represent the entire dataset. The quality of these centers (and the associated clusters) is usually measured by a cost function ${\rm cost}(P, C)$. Two of the most prevalent cost functions for clustering are the $k$-median and the $k$-means functions, which are defined as follows.
\begin{enumerate}
\item \textbf{$k$-median}: ${\rm cost}(P, C) = \sum_{i \in [n]} d(\p{p}_i, C)$,
\item \textbf{$k$-means}: ${\rm cost}(P, C) = \sum_{i \in [n]} d^2(\p{p}_i, C)$,
\end{enumerate}
where, $d(\p{x}, \p{y})$ denotes the Euclidean distance between two points $\p{x}, \p{y} \in \R^d$ and $d(\p{x}, C) := \min_{\p{c} \in C} d(\p{x}, \p{c})$. 
We denote the cluster associated with the center  $\p{c} \in C$ by 
$ {\rm cluster}(\p{c}, P) := \{ \p{x} \in P : \p{c} = \arg\min_{\p{c'} \in C} d(\p{x}, \p{c'}) \}$. 
For any $\alpha> 1$, the set of cluster centers $C$, is called an $\alpha$-approximate solution to the clustering problem if the cost of clustering $P$ with $C$, ${\rm cost}(P, C)$, is at most $\alpha$ times the optimal (minimum) clustering cost with $k$-centers. 

In certain applications, the dataset $P$ is \emph{weighted} with an associated non-negative weight function $w:P\to\R$. The $k$-means cost for such a weighted dataset $(P, w)$ is defined as ${\rm cost}(P, C, w) = \sum_{i \in [n]} w(\p{p}_i) ~d^2(\p{p}_i, C).$ The $k$-median cost for $(P, w)$  is analogously defined.

We also consider a general class of $\ell_2$-error fitting problems  known as the $(r, k)$-subspace clustering problem. 
\begin{definition}[$(r, k)$-subspace clustering]
Given a dataset $P \subset \R^d$ find a set of $k$-subspaces (linear or affine) ${\cal L} = \{L_i \}_{i=1}^k$, each of dimension $r$, that minimizes
$
{\rm cost}(P, {\cal L}) := \sum_{i=1}^n \min_{L \in {\cal L} }d^2(\p{p}_i,  L).
$
\end{definition}

Note that for $r=0$, this is exactly the $k$-means problem described above. Another special case, when $k=1$, is known as principal component analysis (PCA). If we consider the matrix $M \in \R^{n \times d}$, with the data points in $P$ as its rows, it is well-known that the desired subspace is spanned by the top $r$-right singular vectors of $M$.

\subsection{Coresets and clustering}
In a distributed computing setup, where $i$-th compute node stores a partial dataset $P_i \subset P$, one way to perform distributed clustering is to have each node communicate a summary of its local data to the coordinator. An approximate solution to the clustering problem can then be computed from the combined summary received from all the compute nodes. This summary, called a \emph{coreset}, is essentially a weighted set of points that approximately represents the original set of points in $P$.

\begin{definition}[$\eps$-coreset]
\label{def:coreset}
For $0 < \eps < \frac13$,  an $\eps$-coreset for a dataset $P$ with respect to a cost function ${\rm cost}(\cdot, \cdot )$ is a weighted dataset $S$ with an associated weight function $w: S \rightarrow \R$ such that, for any set of centers $C$, we have 
\begin{align*}
(1-\eps)~{\rm cost}(P, C) &\le {\rm cost}(S, C, w)   \le (1+\eps)~{\rm cost}(P, C).%,
\end{align*}
\end{definition}

The next results shows the utility of a coreset for clustering. 
\vspace{-1em}
\begin{theorem}[~\cite{feldman13}]
\label{thm:coreset-clustering}
Let $(S, w)$ be an $\eps$-coreset for a dataset $P$ with respect to the cost function ${\rm cost}(\cdot, \cdot)$. Any $\alpha$-approximate solution to the clustering problem on input $S$, is an $\alpha(1+3\eps)$-approximate solution to the clustering problem on $P$.
\end{theorem}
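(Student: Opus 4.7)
The plan is to chain together the two directions of the coreset inequality with the $\alpha$-approximation guarantee on $S$, and then reduce the resulting multiplicative factor to $\alpha(1+3\eps)$ using the hypothesis $\eps < 1/3$. Let $C^\star_P$ denote an optimal set of $k$ centers for $P$ (so $\cost(P,C^\star_P)$ is the optimum) and let $C^\star_S$ be an optimal (weighted) set of centers for the coreset $(S,w)$. Let $C$ be the $\alpha$-approximate solution on $S$ delivered by the hypothesis, so that $\cost(S,C,w)\le \alpha\,\cost(S,C^\star_S,w)$.

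First I would upper-bound $\cost(P,C)$ by its coreset counterpart using the lower-bound side of Definition~\ref{def:coreset}: $(1-\eps)\cost(P,C)\le \cost(S,C,w)$, i.e., $\cost(P,C)\le \frac{1}{1-\eps}\cost(S,C,w)$. Next, I would bound $\cost(S,C,w)$ by replacing $C^\star_S$ with the (possibly suboptimal for $S$) choice $C^\star_P$, giving $\cost(S,C,w)\le \alpha\,\cost(S,C^\star_S,w)\le \alpha\,\cost(S,C^\star_P,w)$. Finally, apply the upper-bound side of the coreset inequality to $C^\star_P$: $\cost(S,C^\star_P,w)\le (1+\eps)\cost(P,C^\star_P)$. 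Stringing these together yields
\[
\cost(P,C)\;\le\;\frac{\alpha(1+\eps)}{1-\eps}\,\cost(P,C^\star_P).
\]

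The remaining step is purely algebraic: I need $\frac{1+\eps}{1-\eps}\le 1+3\eps$, which is equivalent to $1+\eps\le (1+3\eps)(1-\eps)=1+2\eps-3\eps^2$, i.e., $3\eps^2\le\eps$, i.e., $\eps\le 1/3$. This is exactly the range of $\eps$ guaranteed by Definition~\ref{def:coreset}, so the bound $\cost(P,C)\le \alpha(1+3\eps)\cost(P,C^\star_P)$ follows, certifying that $C$ is an $\alpha(1+3\eps)$-approximate solution on $P$.

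There is no real obstacle here; the entire argument is a two-line sandwich via Definition~\ref{def:coreset} plus the $\alpha$-approximation guarantee, followed by the elementary inequality $\frac{1+\eps}{1-\eps}\le 1+3\eps$ valid for $\eps\in(0,1/3)$. The only subtlety worth flagging is the middle step where we upper-bound $\cost(S,C^\star_S,w)$ by $\cost(S,C^\star_P,w)$; this uses only that $C^\star_S$ is optimal for the weighted instance $(S,w)$, not any structural property of $C^\star_P$.
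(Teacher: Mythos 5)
Your proof is correct and follows essentially the same route as the paper's: the identical four-inequality sandwich (coreset lower bound on $C$, the $\alpha$-approximation guarantee, optimality of $C^*_S$ on $(S,w)$, coreset upper bound on $C^*_P$), just written starting from $\cost(P,C)$ rather than from $\cost(P,C^*_P)$. Your explicit verification that $\frac{1+\eps}{1-\eps}\le 1+3\eps$ exactly when $\eps\le 1/3$ is a welcome touch the paper leaves implicit.
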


\subsection{Straggler-resilient distributed clustering}
The main objective of this paper is to design the distributed clustering methods that are robust to the presence of straggling nodes. Since the straggling nodes are unable to communicate the information about their local data, the distributed clustering method may miss valuable structures in the dataset resulting from this information loss. This can potentially lead to clustering solutions with poor quality (as verified in Section~\ref{sec:experiments}).

Given the prevalence of the stragglers in modern distributed computing systems, it is natural to desire clustering methods that generate provably good clustering solutions despite the presence of stragglers. Let ${\rm OPT}$ be the cost of the best clustering solution for the underlying dataset. In this paper, we explore the following question: {\em Given a dataset $P$ and distributed computing setup with $s$ compute nodes where at most $t$ nodes may behave as stragglers, can we design a clustering method that generates a solution with the cost at most $c\cdot {\rm OPT}$, for a small approximation factor $c \geq 1$?}

In this paper, we affirmatively answer this question for the $k$-median clustering and the $(r, k)$-subspace clustering. Our proposed solutions add on to the growing literature on straggler mitigation via coded computation, which has primarily focused on the supervised learning tasks so far.

%%%%%%%%%%%%%%%%%%%%%%%%%%%%%%%%%%%%%%%%%%%%%%%%%%%%%%%%%%%%%%%%%%%%%%%%%%
%% Main results
%%%%%%%%%%%%%%%%%%%%%%%%%%%%%%%%%%%%%%%%%%%%%%%%%%%%%%%%%%%%%%%%%%%%%%%%%%%

\section{Main Results}
We propose to systematically modify the initial data assignment to the compute nodes in order to mitigate the effect of stragglers. In particular, we employ redundancy in the assignment process and map every vector in  the dataset $P$ to multiple compute nodes. 
This way each vector affects the local computation performed at multiple compute nodes, which allows us to obtain final clusters at the coordinator server by taking into account the contribution of most of the vectors in $P$ even when some of the compute nodes behave as stragglers.

We first introduce the assignment schemes with {\em straggler-resilience property}. This property enables us to combine local computations from non-straggling compute nodes at the coordinator while preserving most of the relevant information present in the dataset $P$ for the underlying clustering task. Subsequently, we utilize such an assignment scheme to obtain good-quality solutions to the $k$-medians and the $(r, k)$-subspace clustering problem in Section~\ref{subsec:kmeds} and Section~\ref{subsec:pca}, respectively. Finally, in Section~\ref{subsec:construction}, we propose a randomized construction of an assignment scheme with the desired straggler-resilience property.

%%%%%%%%%%%%%%%%%%%%%%%%%%%%%%%%%%%%%%%%%
%%%%%%%%%%%%%%%%%%%%%%%%%%%%%%%%%%%%%%%%%
\subsection{Straggler-resilient data assignment}
\label{subsec:assignment}

Let the compute nodes in the system be indexed by the set $[s] := \{1, \ldots, s\}$. Furthermore let $\p{p} \in P$ be assigned to the compute nodes indexed by the set $\cala_{\p{p}} \subset [s]$. We can alternatively represent the overall data assignment $\cala = \{\cala_{\p{p}}\}_{\p{p} \in P}$ by an {\em assignment matrix} $A \in \{0, 1\}^{s \times n}$, where the $n$ columns and the $s$ rows of $A$ are associated with distinct points in $P$ and distinct compute nodes, respectively. In particular, the $j$-th column of $A$, which corresponds to the data point $\p{p}_j$, is an indicator of $\cala_{\p{p}_j}$, i.e., 
$A_{i, j} = 1 \text{ if and only if }  i \in \cala_{\p{p_j}}.$
For any $i \in [s]$, we denote the set of data points allocated to the $i$-th compute node by  $P_i = \{\p{p} \in P : i \in \cala_{\p{p}}\}$.

Let $\calr \subset [s]$ denote the set of non-straggling compute nodes. We assume that $|\calr| \ge s-t$, where $t < s$ denotes an upper bound on the number of stragglers in the system. Let $A_{\calr} \in \{ 0,1\}^{|\calr| \times n}$ denote the submatrix of $A$ with only the rows corresponding to the non-straggling compute nodes (indexed by $\calr$). For any such set of non-stragglers $\calr$, we require that the assignment matrix $A$ satisfies the following property. 
\begin{property}[Straggler-resilience property]
\label{prop:assignment}
Let $\delta > 0$ be a given constant. For every $\calr \subset [s]$ with $|\calr| \geq s - t$, there exists a recovery vector, $\p{b} = (b_1,\ldots, b_{|\calr|}) \in \R^{|\calr|}, b_i\ge 0 \forall i \in [n],$ such that for some $1 \leq a_i \leq 1 + \delta~~\text{for all}~i \in [n]$,
\begin{align}
\label{eq:assum1}
\p{b}^T A_{\calr} = \p{a} \equiv
(a_1, a_2,\ldots, a_n).
\end{align}
\end{property}

\begin{remark}
\label{rem:gradient-coding}
The straggler-resilience property is closely related to the gradient coding introduced in~\cite{TLDK17}. However, two key points distinguish our work from the gradient coding work. First, the recovery vector $\p{b}$ is restricted to have only non-negative coordinates. Second, and more importantly, the utilization of the redundant data assignment in this work (cf.~Lemma~\ref{lem:weightedcoreset}) differs from that of gradient coding in \cite{TLDK17} where gradient coding is used to recover the full-gradient.
\end{remark}

The following result, which is based on the combinatorial characterization for the assignment scheme enforced by Property~\ref{prop:assignment}, enables us to combine the information received from non-stragglers to generate close to optimal clustering solutions.
\begin{lemma}\label{lem:weightedcoreset} 
Let $P \subset \R^d$ be a dataset distributed across $s$ compute nodes using an assignment matrix $A$ that satisfies Property~\ref{prop:assignment}. Let $\calr$ denote the set of non-straggler nodes. For any $\delta >0$, let $\p{b} \in \R^{|\calr|}$ be the recovery vector corresponding to $\calr$. 
Then, for any set of $k$ centers $C \subset \R^d$, and any weight function $w: P \rightarrow \R$, 
$${\rm cost}(P, C, w) \le \sum_{i \in \calr} b_i ~{\rm cost}(P_i, C, w)
\le (1+\delta) {\rm cost}(P, C, w).$$
\end{lemma}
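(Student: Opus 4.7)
The plan is a short bookkeeping computation that swaps the order of summation and then applies Property~\ref{prop:assignment} coordinate-wise in the data-point index. The key observation is that every cost function considered in the paper has the form $\cost(P, C, w) = \sum_{j=1}^n w(\p{p}_j)\, g(\p{p}_j, C)$ for some per-point, non-negative contribution $g(\p{p}_j, C) \ge 0$ (e.g.\ $d(\p{p}_j, C)$ for $k$-medians, $d^2(\p{p}_j, C)$ for $k$-means, or $\min_{L \in \mathcal{L}} d^2(\p{p}_j, L)$ for $(r,k)$-subspace clustering), so the local cost at node $i$ is a linear function of the $i$-th row of $A$.

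First, I would use the defining equivalence $\p{p}_j \in P_i \iff A_{i,j} = 1$ to write
$$\cost(P_i, C, w) \;=\; \sum_{j=1}^n A_{i,j}\, w(\p{p}_j)\, g(\p{p}_j, C).$$
Second, I would form the combined quantity and interchange the order of summation, then substitute $\p{b}^T A_{\calr} = \p{a}$ from Property~\ref{prop:assignment}:
\begin{align*}
\sum_{i \in \calr} b_i \, \cost(P_i, C, w)
&= \sum_{j=1}^n w(\p{p}_j)\, g(\p{p}_j, C) \sum_{i \in \calr} b_i\, A_{i,j} \\
&= \sum_{j=1}^n a_j\, w(\p{p}_j)\, g(\p{p}_j, C).
\end{align*}
Third, since each summand $w(\p{p}_j)\, g(\p{p}_j, C)$ is non-negative and $1 \le a_j \le 1+\delta$ for every $j \in [n]$, the last expression is sandwiched between $\cost(P, C, w)$ and $(1+\delta)\,\cost(P, C, w)$, which yields both claimed inequalities.

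There is essentially no obstacle here; the proof is mechanical once Property~\ref{prop:assignment} is unpacked. The one point worth emphasizing is that the lower bound $a_j \ge 1$ (rather than $a_j > 0$) is exactly what delivers $\cost(P, C, w) \le \sum_{i \in \calr} b_i \cost(P_i, C, w)$, while $a_j \le 1+\delta$ controls the multiplicative blow-up in the upper bound. The non-negativity of $\p{b}$ promised by Property~\ref{prop:assignment} is not strictly needed for the chain of inequalities, but it is natural because each term $b_i\,\cost(P_i, C, w)$ then represents a genuine non-negative contribution from the $i$-th non-straggler; it also matters for the coreset interpretation via Theorem~\ref{thm:coreset-clustering}, where the combined weights must be non-negative.
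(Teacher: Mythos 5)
Your proof is correct and follows essentially the same route as the paper's: expand each local cost as a sum over $j\in[n]$ weighted by $A_{i,j}$, swap the order of summation, and apply the bounds $1\le \sum_{i\in\calr}b_iA_{i,j}\le 1+\delta$ from Property~\ref{prop:assignment} termwise. Your only (harmless) variation is phrasing the argument for a generic non-negative per-point cost $g$ rather than for $d^2$ with a remark that it extends, and your explicit note that non-negativity of the summands is what makes the sandwiching valid is a point the paper leaves implicit.
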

Equipped with Lemma~\ref{lem:weightedcoreset}, we are now in the position to describe our solutions for the straggler-resilient clustering.

%%%%%%%%%%%%%%%%%%%%%%%%%%%%%%%%%%%%%%%%
%%%%%%%%%%%%%%%%%%%%%%%%%%%%%%%%%%%%%%%%%

\subsection{Straggler-resilient distributed $k$-median}
\label{subsec:kmeds}
We distributed the dataset $P$ among the $s$ compute nodes using an assignment matrix that satisfies Property~\ref{prop:assignment}. Each compute node sends a set of (weighted) $k$-medians centers of their local datasets which when combined at the coordinator gives a summary for the entire dataset. Thus, the weighted $k$-median clustering on this summary at the coordinator gives a good quality clustering solution for the entire dataset $P$. Algorithm~\ref{fig:distributedClustering} provides a detailed description of this approach. 
\setlength\floatsep{1mm}
\setlength\textfloatsep{1mm}
\setlength\intextsep{1mm}
\begin{algorithm}[h!]
\begin{algorithmic}[1]
\STATE Input: A collection of $n$ vectors $P \subset \R^d$.
\STATE Allocate $P$  to $s$ workers according to $A$ with Property~\ref{prop:assignment}.
\STATE For each $i \in [s]$, construct $Y_i$, the $k$-median centers in $P_{j}$. Define function $w_i:Y_i \rightarrow \R$ as $w_i(\p{c}) := | {\rm cluster}(\p{c}, P_i) |$, for every $\p{c} \in Y_i$.
\STATE Collect $\{ Y_i \}_{i \in \calr}$ from the non-straggling nodes.
\STATE Let $Y := \cup_{i \in \calr} Y_i$. Using $\p{b}$ from \eqref{eq:assum1}, define $w:Y\rightarrow \R$ such that\footnotemark $w(\p{c}) = b_i \cdot w_i(\p{c})$ for all $\p{c} \in Y_i$ and $i \in \calr$.
\STATE Return $\widehat{C}$, the $k$-median cluster centers of $Y$. 
\end{algorithmic}
\caption{Straggler-resilient distributed $k$-medians}
\label{fig:distributedClustering}
\end{algorithm}
\footnotetext{If for some $i_1\neq i_2 \cdots \neq i_u$, $\p{c} \in Y_{i_1} \cap Y_{i_2} \cdots \cap Y_{i_u}$, then we define $w(\p{c}) = b_{i_1} \cdot w_{i_1}(\p{c}) + b_{i_2} \cdot w_{i_2}(\p{c}) \cdots + b_{i_u} \cdot w_{i_u}(\p{c})$.}

Before assessing the quality of $\widehat{C}$ on the entire dataset $P$, we show that for any set of $k$ centers $C$, the cost incurred by the weighted dataset $Y$ is close to the cost incurred by $P$. 
\begin{lemma}\label{lem:kmedianscoreset}
For any set of $k$-centers $C \subset \R^d$
\begin{align*}
&\cost(P, C) - \sum_{i \in \calr} b_i \cost(P_i, Y_i)\\
&\le \cost(Y, C, w) \le 2(1+\delta)\cost(P, C). 
\end{align*}
\end{lemma}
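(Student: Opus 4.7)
The plan is to rewrite $\cost(Y,C,w)$ by grouping it by worker $i \in \calr$, apply the triangle inequality pointwise, and then invoke Lemma~\ref{lem:weightedcoreset} on the resulting sums $\sum_{i \in \calr} b_i \cost(P_i, C)$.

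First I would unfold the definition of the weight $w$. Since a center $\p{c}$ that appears in several sets $Y_i$ inherits its weight as a sum of $b_i w_i(\p{c})$, swapping the order of summation gives
\[
\cost(Y,C,w)=\sum_{i\in\calr} b_i \sum_{\p{c}\in Y_i} w_i(\p{c})\, d(\p{c},C).
\]
Next, for each $i$ let $\pi_i:P_i\to Y_i$ be the nearest-center map of $P_i$ to $Y_i$. Then $w_i(\p{c})=|\pi_i^{-1}(\p{c})|$, so the inner sum rewrites as $\sum_{\p{p}\in P_i} d(\pi_i(\p{p}),C)$. This identification is the key reformulation, because it lets us compare distances from $\p{p}$ to $C$ (the quantity appearing in $\cost(P,C)$) with distances from $\pi_i(\p{p})$ to $C$.

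For the upper bound I would use the triangle inequality $d(\pi_i(\p{p}),C)\le d(\pi_i(\p{p}),\p{p})+d(\p{p},C)$ and sum over $\p{p}\in P_i$, yielding $\sum_{\p{p}\in P_i} d(\pi_i(\p{p}),C)\le \cost(P_i,Y_i)+\cost(P_i,C)$. Multiplying by $b_i$, summing over $\calr$, and applying Lemma~\ref{lem:weightedcoreset} to $\sum_{i\in\calr} b_i\cost(P_i,C)$ gives
\[
\cost(Y,C,w)\le \sum_{i\in\calr} b_i\cost(P_i,Y_i)+(1+\delta)\cost(P,C).
\]
To close the estimate I would then use that $Y_i$ is an optimal set of $k$-median centers for $P_i$, hence $\cost(P_i,Y_i)\le \cost(P_i,C)$, so the first term is also at most $(1+\delta)\cost(P,C)$ by a second application of Lemma~\ref{lem:weightedcoreset}. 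Adding these two estimates yields the factor $2(1+\delta)$.

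For the lower bound I would apply the reverse form of the triangle inequality: $d(\pi_i(\p{p}),C)\ge d(\p{p},C)-d(\p{p},\pi_i(\p{p}))$. Summing over $\p{p}\in P_i$, multiplying by $b_i$, and summing over $\calr$ gives $\cost(Y,C,w)\ge \sum_{i\in\calr} b_i\cost(P_i,C)-\sum_{i\in\calr} b_i\cost(P_i,Y_i)$, and the left part is at least $\cost(P,C)$ by the lower inequality in Lemma~\ref{lem:weightedcoreset} (using $b_i\ge 0$ and $a_i\ge 1$). I don't expect a real obstacle here; the only subtle point is being careful that the non-negativity of $\p{b}$ in Property~\ref{prop:assignment} is exactly what lets the inequalities from Lemma~\ref{lem:weightedcoreset} survive after reweighting by $b_i\cost(P_i,\cdot)$.
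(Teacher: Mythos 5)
Your proposal is correct and is essentially the paper's own argument: the same regrouping of $\cost(Y,C,w)$ by worker, the same triangle-inequality comparison between $d(\p{p},C)$ and $d(\pi_i(\p{p}),C)$ (the paper writes it via the map $\p{x}\mapsto C(\p{x})$, you via the reverse triangle inequality, which is the same chain read in the opposite direction), and the same two applications of Lemma~\ref{lem:weightedcoreset} plus the optimality of $Y_i$ on $P_i$ to reach the factor $2(1+\delta)$. No gaps.
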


The following result quantifies the quality of the clustering solution returned by Algorithm~\ref{fig:distributedClustering} on the entire dataset $P$.

\begin{theorem}\label{thm:coresets}
Let $C^*$ be the optimal set of $k$-median centers for the dataset $P$. Then, 
${\rm cost}(P, \widehat{C}) \le 3(1+\delta) {\rm cost}(P, C^*).$ 
\end{theorem}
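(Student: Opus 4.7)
The plan is to apply Lemma~\ref{lem:kmedianscoreset} together with the optimality of $\widehat{C}$ on the weighted summary, and to control the ``residual'' term $\sum_{i \in \calr} b_i \cost(P_i, Y_i)$ using Lemma~\ref{lem:weightedcoreset}.

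First I would use the lower bound in Lemma~\ref{lem:kmedianscoreset} with $C = \widehat{C}$ to get
\[
\cost(P, \widehat{C}) \;\le\; \cost(Y, \widehat{C}, w) \;+\; \sum_{i \in \calr} b_i \cost(P_i, Y_i).
\]
Since $\widehat{C}$ is (by construction in Algorithm~\ref{fig:distributedClustering}) the optimal set of $k$-median centers for the weighted dataset $(Y,w)$, for the first term I can replace $\widehat{C}$ by the global optimum $C^*$: $\cost(Y, \widehat{C}, w) \le \cost(Y, C^*, w)$. Then the upper bound in Lemma~\ref{lem:kmedianscoreset} applied with $C=C^*$ gives $\cost(Y, C^*, w) \le 2(1+\delta)\,\cost(P, C^*)$.

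For the second term, the key observation is that each $Y_i$ is chosen as the optimal $k$-median centers of the local dataset $P_i$, so in particular $\cost(P_i, Y_i) \le \cost(P_i, C^*)$ for every $i$. Summing with the non-negative coefficients $b_i$ and invoking Lemma~\ref{lem:weightedcoreset} (applied with the uniform weight function $w \equiv 1$ on $P$ and centers $C^*$) yields
\[
\sum_{i \in \calr} b_i \cost(P_i, Y_i) \;\le\; \sum_{i \in \calr} b_i \cost(P_i, C^*) \;\le\; (1+\delta)\,\cost(P, C^*).
\]
Combining the two bounds gives $\cost(P, \widehat{C}) \le \bigl(2(1+\delta) + (1+\delta)\bigr)\cost(P, C^*) = 3(1+\delta)\cost(P, C^*)$, as claimed.

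The main obstacle is purely conceptual, not technical: one has to recognize that the local centers $Y_i$ play a dual role. On the one hand, they are used as a ``coreset-like'' summary whose quality is governed by Lemma~\ref{lem:kmedianscoreset}; on the other hand, their optimality for $P_i$ lets us bound the extra additive error by the cost of $C^*$ itself. Once this is seen, the estimates chain together via Lemmas~\ref{lem:weightedcoreset} and~\ref{lem:kmedianscoreset} without any further computation. A minor subtlety worth flagging in the writeup is that if the local $k$-medians are computed only approximately (say to factor $\beta$), the same argument goes through with a final factor of $(2+\beta)(1+\delta)$ instead of $3(1+\delta)$.
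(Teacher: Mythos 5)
Your proposal is correct and follows essentially the same route as the paper's proof: the lower bound of Lemma~\ref{lem:kmedianscoreset} at $C=\widehat{C}$, the optimality of $\widehat{C}$ on $(Y,w)$ and of each $Y_i$ on $P_i$, and then the upper bound of Lemma~\ref{lem:kmedianscoreset} together with Lemma~\ref{lem:weightedcoreset} at $C=C^*$ to get $2(1+\delta)+(1+\delta)=3(1+\delta)$. Your closing remark about the factor degrading to $(2+\beta)(1+\delta)$ under $\beta$-approximate local solvers is a sensible observation but goes beyond what the theorem requires.
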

\begin{proof}
Using the lower bound from Lemma~\ref{lem:kmedianscoreset} with $C = \widehat{C}$,
\begin{align*}
\label{eq:thm3_1}
\cost(P, \widehat{C}) 
&\le \cost(Y, \widehat{C}, w) + \sum_{i \in R} b_i \cost(P_i, Y_i) \\
&\overset{(i)}{\le}  \cost(Y, C^*, w) + \sum_{i \in R} b_i \cost(P_i, C^*) \\
&\overset{(ii)}{\le} 2(1+\delta)\cost(P, C^*) + (1+\delta)\cost(P, C^*),
\end{align*}
where $(i)$ follows from the fact that $\widehat{C}$ and $Y_i$ are the optimal set of centers for the weighted dataset $(Y, w)$ and the partial dataset $P_i$, respectively. For $(ii)$, we utilize the upper bound in Lemma~\ref{lem:kmedianscoreset} and Lemma~\ref{lem:weightedcoreset} (with $C = C^*$). 
\end{proof}

In Algorithm~\ref{fig:distributedClustering}, each compute node sends clustering solution on its local data using which the coordinator is able to construct a good summary of the entire dataset $P$ despite the presence of the stragglers. This summary is sufficient to generate a good quality $k$-median clustering solution on $P$. In Section~\ref{subsec:pca}, we show that if each compute node sends more information in the form of a coreset of its local data, the accumulated information at the coordinator is sufficient to solve the more general problem of $(r, k)$-subspace clustering in a straggler-resilient manner.

%%%%%%%%%%%%%%%%%%%%%%%%%%%%%%%%%%%%%%%%%
%%%%%%%%%%%%%%%%%%%%%%%%%%%%%%%%%%%%%%%%%

\subsection{Straggler-resilient distributed $(r,k)$-subspace clustering}
\label{subsec:pca}

In this subsection, we utilize our redundant data assignment with straggler-resilient property to combine local coresets received from the non-straggling nodes to obtain a global coreset for the entire dataset, which further enables us to perform distributed $(r,k)$-subspace clustering in a straggler-resilient manner. In particular, we propose two approaches to perform subspace clustering, which rely on the coresets~\cite{har2004coresets} and the {\em relaxed} coresets~\cite{feldman13, balcan14}, respectively.

\subsubsection{Distributed $(r,k)$-subspace clustering using coresets}
\label{subsubsec:subspace}
Here, we propose a distributed $(r, k)$-subspace clustering algorithm that used the existing coreset constructions from the literature in a black-box manner. Each compute node sends a coreset of its partial data which when re-weighted according to Lemma~\ref{lem:weightedcoreset} gives us a coreset for the entire dataset even in presence of stragglers. Given this global coreset, we can then construct a solution to the underlying $(r, k)$-subspace clustering problem at the coordinator (cf.~Theorem~\ref{thm:coreset-clustering}). The complete description of this approach is given in Algorithm~\ref{fig:subspaceClustering}. 

\vspace{0.1em}
\begin{algorithm}[h!]
\begin{algorithmic}[1]
\STATE Input:  A collection of $n$ vectors $P \subset \R^d$.
\STATE Allocate $P$ to $s$ workers according to $A$ with Property~\ref{prop:assignment}.
\STATE For each $i \in [s]$, find a $\delta$-coreset $(S_i, w_i)$ for $P_{i}$. 
\STATE Collect $\{S_i\}_{i \in \calr}$ at the coordinator.
\STATE Let $S = \cup_{i \in \calr} S_i$. For every $i \in \calr$, scale the weights of the coreset points received from $i$-th node by $b_i$ (cf.~\eqref{eq:assum1}) i.e.,  $w(\p{c})  = b_i w_i(\p{c}) ~\forall \p{c} \in S_i$.
\STATE Return $\widehat{C}$, the set of $r$-subspaces that is an $\alpha$-approximate solution to the $(r, k)$-subspace clustering on input $(S, w)$.
\end{algorithmic}
\caption{Straggler-resilient $(r, k)$-subspace clustering}
\label{fig:subspaceClustering}
\end{algorithm}

Before we analyze the quality of $\widehat{C}$, the solution returned by Algorithm~\ref{fig:subspaceClustering}, we present the following result that shows the utility of an assignment scheme with Property~\ref{prop:assignment} to construct a global coreset for the entire dataset from the coresets of the partial datasets in a straggler-resilient manner.

\begin{lemma}\label{lem:weightedcoreset}
Let $P \subset \mathbb{R}^d$ be distributed according to $A$ with Property~\ref{prop:assignment}. Let $\p{b} \in \R^{|\calr|}$ be the recovery vector for the set of non-straggler nodes $\calr \subset [s]$. 
For any $i \in \calr$, let $S_i$ be an $\eps$-coreset for the local dataset $P_i$ with weight function $w_i: P_i \rightarrow \R$ with respect to the cost function ${\rm cost}(\cdot, \cdot)$. 
Then, $S:= \cup_{i \in \calr} S_i$ with the weight function $w: S \rightarrow \R$ defined as $w(\p{c}) = w_i(\p{c})\cdot b_i$ for all $\p{c} \in S_i$ is a $2(\eps+\delta)$-coreset for $P$. 
\end{lemma}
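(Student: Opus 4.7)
The plan is to combine the $\eps$-coreset guarantee that holds \emph{locally} on each partition $P_i$ with the straggler-resilience identity supplied by the recovery vector $\p{b}$. Concretely, I would first rewrite the aggregated cost as a weighted sum over the local coresets, then compose the two multiplicative approximations $(1\pm\eps)$ and $(1+\delta)$ to obtain the claimed bound.

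For the first step, I would use the definition of $\cost(S, C, w)$ and the additive convention that when a point $\p{c}$ lies in several $S_i$'s its weight in $S$ is the corresponding sum of $b_i w_i(\p{c})$'s (exactly as in the footnote of Algorithm~\ref{fig:distributedClustering}). This gives
\[
\cost(S, C, w) \;=\; \sum_{\p{c}\in S} w(\p{c})\,d^2(\p{c}, C) \;=\; \sum_{i \in \calr} b_i \sum_{\p{c}\in S_i} w_i(\p{c})\, d^2(\p{c}, C) \;=\; \sum_{i \in \calr} b_i\, \cost(S_i, C, w_i),
\]
with the analogous identity when $\cost$ uses $d$ instead of $d^2$. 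I would then apply the $\eps$-coreset property of each $S_i$ with respect to $P_i$ to sandwich $\cost(S_i, C, w_i)$ between $(1-\eps)\cost(P_i, C)$ and $(1+\eps)\cost(P_i, C)$; since Property~\ref{prop:assignment} forces $b_i \ge 0$, these inequalities survive multiplication by $b_i$ and summation over $\calr$, yielding
\[
(1-\eps)\sum_{i \in \calr} b_i\,\cost(P_i, C) \;\le\; \cost(S, C, w) \;\le\; (1+\eps)\sum_{i \in \calr} b_i\,\cost(P_i, C).
\]

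For the second step, I would invoke the earlier Lemma (the combining lemma proved via Property~\ref{prop:assignment}) with the trivial weight function $w\equiv 1$ on $P$, giving $\cost(P, C) \le \sum_{i \in \calr} b_i \cost(P_i, C) \le (1+\delta)\cost(P,C)$. Chaining this with the previous display produces
\[
(1-\eps)\,\cost(P, C) \;\le\; \cost(S, C, w) \;\le\; (1+\eps)(1+\delta)\,\cost(P, C).
\]
To finish, I would use the elementary estimates $(1+\eps)(1+\delta) = 1 + \eps + \delta + \eps\delta \le 1 + 2(\eps+\delta)$ (valid whenever $\eps\delta \le \eps+\delta$, in particular for $\eps,\delta \in (0, 1)$) and $1-\eps \ge 1 - 2(\eps+\delta)$, which together match the $2(\eps+\delta)$-coreset condition of Definition~\ref{def:coreset}.

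The only subtle step is the decomposition $\cost(S, C, w) = \sum_{i\in\calr} b_i\,\cost(S_i, C, w_i)$ when the local coresets $S_i$ overlap. Everything else reduces to non-negativity of $b_i$ and routine arithmetic, so the key point to get right is that the additive definition of $w$ on shared points is precisely what turns this decomposition into an \emph{equality} rather than an upper bound, preserving both sides of the coreset sandwich.
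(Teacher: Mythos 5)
Your proposal is correct and follows essentially the same route as the paper's proof: decompose $\cost(S,C,w)=\sum_{i\in\calr}b_i\,\cost(S_i,C,w_i)$, sandwich each local term via the $\eps$-coreset guarantee, apply the combining lemma from Property~\ref{prop:assignment} to pass from $\sum_i b_i\,\cost(P_i,C)$ to $\cost(P,C)$, and finish with the arithmetic $(1+\eps)(1+\delta)\le 1+2(\eps+\delta)$. Your explicit remark that the additive weight convention on overlapping coreset points is what makes the decomposition an exact equality is a detail the paper leaves implicit, but it is not a different argument.
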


Since each $S_i$ is a $\delta$-coreset for $S_i$, it follows from Lemma~\ref{lem:weightedcoreset} that $S$ is a $4\delta$-coreset for $P$. This allows us to quantify the quality of $\widehat{C}$ as a solution to the underlying problem of $(r, k)$-subspace clustering on $P$.
\begin{theorem}\label{thm:coresetapprox}
Let ${\rm OPT}$ be the cost of the optimal $(r, k)$-subspace clustering solution for $P$. Then, we have that $\cost(P, \widehat{C}) \le \alpha (1+8 \delta) \cdot {\rm OPT}$.
\end{theorem}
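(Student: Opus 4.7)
The plan is to compose the coreset guarantee of Lemma~\ref{lem:weightedcoreset} (the version stated just above the theorem) with the $\alpha$-approximation guarantee of $\widehat{C}$ on the aggregated weighted set $(S,w)$. Each local $(S_i,w_i)$ is a $\delta$-coreset for $P_i$, and $A$ satisfies Property~\ref{prop:assignment} with parameter $\delta$, so that lemma (applied with its internal $\eps := \delta$) certifies that $(S,w)$ is a $4\delta$-coreset for $P$. One could now invoke Theorem~\ref{thm:coreset-clustering} as a black box with $\eps = 4\delta$, but that only yields $\alpha(1+12\delta)\,{\rm OPT}$; to reach the sharper constant stated in the theorem I would instead retain the asymmetric bound that the proof of Lemma~\ref{lem:weightedcoreset} actually produces, and plug it into a direct chain.

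Concretely, for any family $C$ of $k$ subspaces of dimension $r$, combining the coreset inequality $(1-\delta)\cost(P_i,C) \le \cost(S_i,C,w_i) \le (1+\delta)\cost(P_i,C)$ with Lemma~\ref{lem:weightedcoreset} (the version from Section~\ref{subsec:assignment}) yields
\[
(1-\delta)\cost(P,C) \;\le\; \cost(S,C,w) \;=\; \sum_{i\in\calr} b_i\,\cost(S_i,C,w_i) \;\le\; (1+\delta)^2\,\cost(P,C).
\]
Let $C^*$ attain ${\rm OPT}$ on $P$ and let $C^\star_S$ minimize $\cost(S,\cdot,w)$. Since $\widehat{C}$ is $\alpha$-approximate on $(S,w)$,
\[
\cost(S,\widehat{C},w) \;\le\; \alpha\,\cost(S,C^\star_S,w) \;\le\; \alpha\,\cost(S,C^*,w) \;\le\; \alpha(1+\delta)^2\,{\rm OPT},
\]
and combining with the lower-side inequality $(1-\delta)\cost(P,\widehat{C}) \le \cost(S,\widehat{C},w)$ gives $\cost(P,\widehat{C}) \le \tfrac{\alpha(1+\delta)^2}{1-\delta}\,{\rm OPT}$. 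The proof closes with the elementary calculus check $\tfrac{(1+\delta)^2}{1-\delta} \le 1+8\delta$, which amounts to $9\delta^2 \le 5\delta$ and hence holds for all $\delta \le 5/9$.

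The main obstacle is not the skeleton of the argument --- it is the standard coreset-to-approximation chain --- but matching the claimed constant $8\delta$ rather than the looser $12\delta$ a direct invocation of Theorem~\ref{thm:coreset-clustering} would produce. The improvement hinges on keeping track of the fact that the aggregated $(S,w)$ satisfies an asymmetric error guarantee with multiplicative upper factor $(1+\delta)^2$ and lower factor $(1-\delta)$, rather than collapsing both sides to the symmetric $4\delta$ slackness mandated by Definition~\ref{def:coreset}. Mild smallness of $\delta$ (so that the $O(\delta^2)$ term can be absorbed) is therefore needed to land exactly at the stated bound.
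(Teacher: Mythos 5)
Your proposal is correct and follows the same skeleton as the paper's proof: aggregate the local $\delta$-coresets into a global (approximate) coreset via the straggler-resilience lemma, then run the standard chain $\cost(P,\widehat{C}) \lesssim \cost(S,\widehat{C},w) \le \alpha\,\cost(S,C^*,w) \lesssim \alpha\,\cost(P,C^*)$. The one place you deviate is deliberate and, as it happens, an improvement. The paper first collapses the aggregated guarantee to a symmetric $4\delta$-coreset and then writes $\alpha\frac{1+4\delta}{1-4\delta}\le\alpha(1+8\delta)$; that last numerical step is actually false for every $\delta>0$, since $\frac{1+4\delta}{1-4\delta}=1+8\delta+O(\delta^2)$ exceeds $1+8\delta$ (e.g.\ at $\delta=0.01$ it is $\approx1.0833>1.08$), so the paper's stated constant only holds up to second-order terms. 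By retaining the asymmetric bounds $(1-\delta)\cost(P,C)\le\cost(S,C,w)\le(1+\delta)^2\cost(P,C)$ that the aggregation argument genuinely produces, you arrive at $\frac{(1+\delta)^2}{1-\delta}\le 1+8\delta$, which is a correct elementary inequality for $0<\delta\le 5/9$ (equivalently $9\delta^2\le5\delta$). So your route proves the theorem as literally stated, under an explicit mild smallness condition on $\delta$, whereas the paper's does not quite; the only cost is that you bypass the clean ``$S$ is a $4\delta$-coreset, apply Definition~\ref{def:coreset}'' packaging in favor of carrying the two one-sided factors through the chain.
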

\begin{proof}
Let $C^*$ be the optimal $(r, k)$-subspace clustering solution for $P$, i.e., $\cost(P, C^*) = {\rm OPT}$.  Since $S$ is a $4\delta$-coreset, we have
\begin{align*}
\cost(P, C) &\overset{(i)}{\le} \frac{ \cost(S, C, w)}{(1-4\delta)}
\overset{(ii)}{\le}  \frac{\alpha}{(1-4\delta)} \cost(S, C^*, w)\\
&\overset{(iii)}{\le}  \alpha \frac{(1+4\delta)}{(1-4\delta)} \cost(P, C^*) 
\le \alpha (1+8\delta) \text{OPT},
\end{align*}
where $(i)$ and $(iii)$ follow from Definition~\ref{def:coreset}; and $(ii)$ follows from the fact that the coordinator performs an $\alpha$-approximate subspace clustering on $(S, w)$.
\end{proof}
Coreset constructions for various clustering algorithms with squared $\ell_2$ error was considered by \cite{har2004coresets,har2004no, edwards2005no, frahling2005coresets, har2007smaller, chen2009coresets, langberg2010universal, feldman2011unified, varadarajan2012near}. However, the size of such constructed coresets depends on the ambient dimension $d$ which makes them prohibitive for high-dimensional datasets. Interestingly, Feldman et al. \cite{feldman13} and Balcan et al. \cite{balcan14} show that one could construct smaller sized relaxed coresets (of size independent of both $n$ and $d$) for the $r$-PCA problem. Further, they use these relaxed coresets for approximate-PCA to solve the class of $(r, k)$-subspace clustering problems. Below we show that, by utilizing a redundant assignment scheme with Property~\ref{prop:assignment}, the distributed approximate PCA and hence, $(r,k)$-subspace clustering algorithms of \cite{feldman13, balcan14} can be made straggler resilient.

\subsubsection{Straggler-resilient PCA using relaxed coresets}
\label{subsubsec:pca}
In this section, we use $P$ to denote both the set of $n$ data points and the $n \times d$ matrix with $n$ points in $P$ as its rows. We use the set and matrix notation interchangeably through this section.

The goal in PCA is to find the linear $r$-dimensional subspace $L$, that \emph{best fits} the data. It is well-known that the subspace spanned by the top $r$ right singular vectors of $P$ gives an optimal solution to the PCA problem. The main question addressed by Feldman et al. \cite{feldman13} is to find an approximate solution to PCA in a distributed setting by constructing relaxed coresets for the local data. In Algorithm~\ref{alg:pca}, we adapt the distributed PCA algorithm of \cite{feldman13} to obtain a straggler-resilient distributed PCA algorithm.
\vspace{0.4em}
\begin{algorithm}[h!]
\begin{algorithmic}[1]
\STATE Input: An $n \times d$ matrix $P$
\STATE Allocate $P$ to $s$ workers according $A$ with Property~\ref{prop:assignment}.
\STATE  Let $P_i$ denote the sub-matrix of $P$ contained at node $i$. Compute the SVD $P_i = U_i \Sigma_i V_i^T$.
\STATE For $r_1 = r + r/\delta -1$, define $\Sigma_i^{(r_1)}$ to be the matrix that contains first $r_1$ diagonal entries of $\Sigma_i$, and $0$ othewise. 
\STATE Send $S_i = \Sigma_i^{(r_1)} V_i^T$. 
\STATE Stack the $S_i$'s received from non-stragglers to 
construct $Y = [ S_i ]_{i \in \calr}$ and define $w(\p{y}) = b_i$, for $\p{y} \in S_i$.
\STATE Let $S = U \Sigma V^T$. 
\STATE Return $\widehat{L} = r\text{-PCA}(S, w)$. 
\end{algorithmic}
\caption{Straggler-resilient distributed $r$-PCA}
\label{alg:pca}
\end{algorithm}

The following result from \cite{feldman13} shows that each $S_i$ in Algorithm~\ref{alg:pca} is a relaxed $\delta$-coreset of $P_i$. Note the additive $\Delta_i$ term which is not present in the original definition of a coreset (cf.~Definition~\ref{def:coreset}).
\begin{lemma}[\cite{feldman13}]
\label{lem:feldmancoreset}
For any $i \in [s]$, let $ \widehat{P}_i$ denote the rows of  $U_i \Sigma_i^{(r_1)} V_i^T$. Then, for any $r$-dimensional linear subspace $L \subset \R^d$, there exists $\Delta_i = \Delta_i(P_i,  \widehat{P}_i)  > 0$ such that
\begin{align*}
\cost(P_i, L) \le \cost( \widehat{P}_i, L) + \Delta_i \le (1+\delta)\cost(P_i, L).
\end{align*}
\end{lemma}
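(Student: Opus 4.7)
The plan is to view $\widehat{P}_i$ and $E_i := P_i - \widehat{P}_i$ as orthogonal ``projections'' of $P_i$ onto complementary singular subspaces, and to set $\Delta_i$ equal to the tail energy $\|E_i\|_F^2$. Since $\widehat{P}_i = U_i \Sigma_i^{(r_1)} V_i^T$ and $E_i = U_i (\Sigma_i - \Sigma_i^{(r_1)}) V_i^T$, and the diagonals of $\Sigma_i^{(r_1)}$ and $\Sigma_i - \Sigma_i^{(r_1)}$ have disjoint support, we get $\widehat{P}_i^T E_i = 0$ and $\widehat{P}_i E_i^T = 0$. Writing $\Pi_L$ for the rank-$r$ orthogonal projection onto $L$ and $\Pi = I - \Pi_L$, a direct expansion of $\cost(P_i, L) = \|P_i \Pi\|_F^2$ then collapses the cross term and yields the Pythagorean identity
\[
\cost(P_i, L) \;=\; \cost(\widehat{P}_i, L) + \|E_i \Pi\|_F^2.
\]
Setting $\Delta_i := \|E_i\|_F^2 = \sum_{j > r_1} \sigma_j^2(P_i)$, which depends only on $(P_i, \widehat{P}_i)$, the bound $\|E_i \Pi\|_F^2 \le \|E_i\|_F^2 = \Delta_i$ (since $\Pi$ is a contraction) immediately gives the first inequality $\cost(P_i, L) \le \cost(\widehat{P}_i, L) + \Delta_i$.

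For the second inequality, the orthogonal split $\|E_i\|_F^2 = \|E_i \Pi\|_F^2 + \|E_i \Pi_L\|_F^2$ lets me rewrite $\cost(\widehat{P}_i, L) + \Delta_i = \cost(P_i, L) + \|E_i \Pi_L\|_F^2$, so it suffices to show $\|E_i \Pi_L\|_F^2 \le \delta \cdot \cost(P_i, L)$. The key step is to use that $\Pi_L$ has rank $r$: by the variational characterization of singular values, $\|E_i \Pi_L\|_F^2 \le \sum_{j=1}^r \sigma_j^2(E_i)$, and since the non-zero singular values of $E_i$ are precisely $\sigma_{r_1+1}(P_i), \sigma_{r_1+2}(P_i), \ldots$, this gives $\|E_i \Pi_L\|_F^2 \le r \cdot \sigma_{r_1+1}^2(P_i)$. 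Against this, any $r$-dimensional $L$ satisfies $\cost(P_i, L) \ge \sum_{j > r}\sigma_j^2(P_i) \ge (r_1 - r)\,\sigma_{r_1+1}^2(P_i)$ by monotonicity of singular values, hence $\|E_i \Pi_L\|_F^2 \le \frac{r}{r_1 - r}\, \cost(P_i, L)$. The choice $r_1 = r + r/\delta - 1$ (or more cleanly $r_1 \ge r + r/\delta$) drives the prefactor down to $\delta$, completing the argument.

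The main obstacle I anticipate is justifying the rank-$r$ bound on $\|E_i \Pi_L\|_F^2$. The naive estimate $\|E_i \Pi_L\|_F^2 \le \|E_i\|_F^2 = \Delta_i$ is far too weak: the tail energy $\sum_{j > r_1}\sigma_j^2(P_i)$ can be arbitrarily large compared to $\cost(P_i, L)$, so one cannot afford to pay $\Delta_i$ on both sides of the inequality. The insight that $\Pi_L$ can ``see'' at most $r$ singular directions of $E_i$ is precisely what lets the excess $\|E_i \Pi_L\|_F^2$ be absorbed into a $\delta$-fraction of $\cost(P_i, L)$, and it is also what dictates the choice $r_1 \approx r + r/\delta$ rather than the naive $r_1 = r$ that would come from Eckart--Young alone.
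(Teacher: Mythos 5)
Your proof is correct; note that the paper itself does not prove this lemma---it is imported verbatim from \cite{feldman13}---and your argument is essentially a faithful reconstruction of the proof given there: the Pythagorean split via $\widehat{P}_i^T E_i = 0$, the choice $\Delta_i = \|P_i - \widehat{P}_i\|_F^2$, and the crucial rank-$r$ trace bound $\|E_i \Pi_L\|_F^2 \le \sum_{j=1}^{r}\sigma_j^2(E_i) \le r\,\sigma_{r_1+1}^2(P_i)$ played against the Eckart--Young lower bound $\cost(P_i,L) \ge \sum_{j>r}\sigma_j^2(P_i)$. The only blemish is the off-by-one in your final denominator: summing $\sigma_j^2(P_i)$ over $j = r+1,\dots,r_1+1$ gives $\cost(P_i,L) \ge (r_1 - r + 1)\,\sigma_{r_1+1}^2(P_i)$, so the prefactor is $r/(r_1 - r + 1)$, which equals exactly $\delta$ for the paper's choice $r_1 = r + r/\delta - 1$; no strengthening of $r_1$ is needed.
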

Note that, for each $i \in [s]$, only the first $r_1$ rows of $S_i$ in Algorithm~\ref{alg:pca} are non-zero; as a result, each local relaxed coreset consists of $r_1$ vectors. Next, we show an equivalent of Lemma~\ref{lem:weightedcoreset} for relaxed coresets.
\begin{lemma}\label{lem:weightedRcoreset}
Let $\Delta:= \sum_{i \in \calr} b_i \Delta_i$. Then, for any $r$-dimensional linear subspace $L \subset \R^d$, 
\[
\cost(P, L) \le \cost(Y, L, w) + \Delta \le (1+4\delta) \cost(P, L).
\]
\end{lemma}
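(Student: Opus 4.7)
The plan is to reduce Lemma~\ref{lem:weightedRcoreset} to two ingredients already available in the paper: the relaxed coreset guarantee of Lemma~\ref{lem:feldmancoreset} applied to each local partition, and the straggler-resilience inequality of Lemma~\ref{lem:weightedcoreset} used to re-aggregate the per-node costs into a cost on $P$.

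First, I would unpack $\cost(Y,L,w)$. Since $Y$ is the vertical stack of the matrices $S_i = \Sigma_i^{(r_1)} V_i^T$ with every row of $S_i$ carrying weight $b_i$, we have $\cost(Y,L,w) = \sum_{i \in \calr} b_i \cost(S_i, L)$. The key identity is $\cost(S_i, L) = \cost(\widehat{P}_i, L)$, because $\widehat{P}_i = U_i S_i$ and $U_i$ is orthogonal, so left-multiplying the rows of $S_i$ by $U_i$ preserves both $\|\cdot\|_F^2$ and the norm of the projection onto $L$; equivalently, $d^2(U_i \p{s}, L) = \|U_i \p{s}\|^2 - \|\Pi_L U_i \p{s}\|^2$ need not equal $d^2(\p{s}, L)$ row-wise, but summed over all rows the Frobenius identities give equality (for linear $L$, $d^2(\p{x}, L) = \|\p{x}\|^2 - \|\Pi_L \p{x}\|^2$, and $\|U_i M\|_F = \|M\|_F$). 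Thus $\cost(Y,L,w) = \sum_{i \in \calr} b_i \cost(\widehat{P}_i, L)$, and therefore
\[
\cost(Y,L,w) + \Delta \;=\; \sum_{i \in \calr} b_i\bigl(\cost(\widehat{P}_i, L) + \Delta_i\bigr).
\]

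Next I would sandwich this sum using Lemma~\ref{lem:feldmancoreset} term by term: $\cost(P_i, L) \le \cost(\widehat{P}_i, L) + \Delta_i \le (1+\delta)\cost(P_i, L)$. Since the weights $b_i$ are nonnegative (by Property~\ref{prop:assignment}), summing preserves the inequalities:
\[
\sum_{i \in \calr} b_i \cost(P_i, L) \;\le\; \cost(Y,L,w) + \Delta \;\le\; (1+\delta)\sum_{i \in \calr} b_i \cost(P_i, L).
\]
Finally, I would invoke Lemma~\ref{lem:weightedcoreset} (the straggler-resilience aggregation lemma for unweighted costs) to obtain $\cost(P, L) \le \sum_{i \in \calr} b_i \cost(P_i, L) \le (1+\delta)\cost(P, L)$. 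Chaining the two sandwiches yields $\cost(P, L) \le \cost(Y, L, w) + \Delta \le (1+\delta)^2 \cost(P, L) \le (1+4\delta)\cost(P, L)$, where the last step uses $\delta < 1$ (indeed $(1+\delta)^2 = 1 + 2\delta + \delta^2 \le 1 + 3\delta \le 1+4\delta$).

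The only subtle step is the row-wise/rotation argument in the first paragraph; everything else is a direct combination of existing lemmas and nonnegativity of $\p{b}$. So the main conceptual obstacle is simply making precise that $\cost(S_i, L) = \cost(\widehat{P}_i, L)$ for linear subspaces $L$, which I would justify via the Frobenius-norm formula $\cost(M, L) = \|M\|_F^2 - \|M\Pi_L\|_F^2$ and the fact that multiplication on the left by an orthogonal matrix preserves $\|\cdot\|_F$.
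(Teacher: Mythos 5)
Your proposal is correct and follows essentially the same route as the paper's proof: the identity $\cost(S_i,L)=\cost(\widehat{P}_i,L)$ via orthogonality of $U_i$, the term-by-term application of Lemma~\ref{lem:feldmancoreset}, and the aggregation $\cost(P,L)\le\sum_{i\in\calr}b_i\cost(P_i,L)\le(1+\delta)\cost(P,L)$ from Lemma~\ref{lem:weightedcoreset}, ending with $(1+\delta)^2\le 1+4\delta$. The only difference is presentational (you sandwich both bounds at once rather than proving them separately), and your careful justification of the rotation step via Frobenius norms is a welcome elaboration of what the paper states in one line.
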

Now, Lemma~\ref{lem:weightedRcoreset} along with Lemma~\ref{lem:feldmancoreset} enable us to guarantee that the solution obtained from Algorithm~\ref{alg:pca} is close to the optimal for the distributed PCA problem.
\begin{theorem}\label{thm:pca}
Let $L^*$ be the optimal solution to $r$-PCA on $P$. Then, ${\rm cost}(P, \widehat{L}) \le (1+4\delta) {\rm cost}(P, L^*)$.
\end{theorem}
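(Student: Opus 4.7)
The plan is to mimic the standard coreset-to-approximation argument (as used in the proof of Theorem~\ref{thm:coresetapprox}), but with the relaxed coreset inequality from Lemma~\ref{lem:weightedRcoreset} in place of the usual multiplicative $\eps$-coreset bound. The crucial observation that makes everything go through is that the additive term $\Delta = \sum_{i \in \calr} b_i \Delta_i$ appearing in Lemma~\ref{lem:weightedRcoreset} does \emph{not} depend on the candidate subspace $L$, so it can be added to both sides of the optimality comparison without interfering with the inequality.

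Concretely, I would proceed in three steps. First, apply the lower bound from Lemma~\ref{lem:weightedRcoreset} with $L = \widehat{L}$ to obtain
\[
\cost(P, \widehat{L}) \le \cost(Y, \widehat{L}, w) + \Delta.
\]
Second, use the fact that $\widehat{L}$ is the exact $r$-PCA solution for the weighted dataset $(Y, w)$, so in particular $\cost(Y, \widehat{L}, w) \le \cost(Y, L^*, w)$, which gives
\[
\cost(P, \widehat{L}) \le \cost(Y, L^*, w) + \Delta.
\]
Third, apply the upper bound from Lemma~\ref{lem:weightedRcoreset} with $L = L^*$ to conclude
\[
\cost(P, \widehat{L}) \le (1 + 4\delta)\,\cost(P, L^*),
\]
which is exactly the claimed guarantee.

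There is essentially no obstacle once Lemma~\ref{lem:weightedRcoreset} is in hand; the whole argument is a three-line chase. The only subtlety worth flagging is that the intermediate quantity $\cost(Y, L^*, w) + \Delta$ must use the \emph{same} $\Delta$ that appeared in the first step, which is fine because $\Delta$ depends only on the local residuals $\Delta_i(P_i, \widehat{P}_i)$ and the recovery coefficients $b_i$, not on $L$. All the real work has already been absorbed into (i) Lemma~\ref{lem:feldmancoreset}, which certifies each local $S_i$ as a relaxed $\delta$-coreset of $P_i$, and (ii) Lemma~\ref{lem:weightedRcoreset}, which combines these local relaxed coresets into a global relaxed $4\delta$-coreset via the recovery vector $\p{b}$ guaranteed by Property~\ref{prop:assignment}.
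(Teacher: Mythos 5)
Your proof is correct and is essentially identical to the paper's own argument: both chain the lower bound of Lemma~\ref{lem:weightedRcoreset} at $\widehat{L}$, the optimality of $\widehat{L}$ on $(Y,w)$, and the upper bound at $L^*$, using the fact that $\Delta$ is independent of the subspace $L$. No issues.
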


\begin{proof}
Note that $\Delta$ is independent of the choice of $L$. Thus, 
\begin{align*}
\cost(P, \widehat{L}) &\overset{(i)}{\le} \cost(Y, \widehat{L}, w) + \Delta \\
&\overset{(ii)}{\le} \cost(Y, L^*, w) + \Delta 
\overset{(i)}{\le} (1+ 4\delta) \cost(P, L^*),
\end{align*}
where $(i)$ and $(iii)$ follow from Lemma~\ref{lem:weightedRcoreset}; and $(ii)$ follows as $\widehat{L}$ is the optimal solution to the $r$-PCA problem on $Y$. 
\end{proof}

%%%%%%%%%%%%%%%%%%%%%%%%%%%%%%%%%%%%%%%%%
%%%%%%%%%%%%%%%%%%%%%%%%%%%%%%%%%%%%%%%%%

\subsection{Construction of assignment matrix}
\label{subsec:construction}
Finally, we present a randomized construction of the assignment matrix that satisfies Property~\ref{prop:assignment}. For the construction, we assume a random straggler model, where every compute node behaves as a straggler independently with probability $p_t$. Therefore, we receive the local computation from each compute node with probability $1-p_t$. 

Consider the following random ensemble of assignment matrices such that for some $\ell$ (to be chosen later) the $(i, j)$-th entry of the assignment matrix is defined as
\begin{align}
\label{eq:B_random}
A_{i,j} = \begin{cases}
1 & \text{with probability}~p_a = \frac{\ell}{s}, \\
0 & \text{with probability}~1 - p_a.
\end{cases}
\end{align} 

We show that for an appropriate choice of $\ell$, and hence $p_a$, the random matrix $A$ satisfies Property~\ref{prop:assignment} with high probability. 

\begin{theorem}\label{thm:randommatrix}
For any $\delta > 0$, the randomized assignment matrix (cf.~\eqref{eq:B_random}) with $\ell = \frac{6(2 + \delta)^2}{\delta^2}\cdot\frac{\log (\sqrt{2}n)}{1 - p_t}$ satisfies Property~\ref{prop:assignment} with probability at least $1 - \frac{1}{n}$ under the random straggler model.
\end{theorem}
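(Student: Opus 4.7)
\textbf{Proof proposal for Theorem~\ref{thm:randommatrix}.} My plan is to take a \emph{uniform} recovery vector, setting $b_i = b$ for some common scalar $b > 0$ and all $i \in \calr$, which reduces Property~\ref{prop:assignment} to a concentration statement about the column sums $X_j := |\{i \in \calr : A_{i,j}=1\}|$ for $j \in [n]$. With this choice, the $j$-th entry of $\p{b}^T A_\calr$ is $b \cdot X_j$, so it suffices to show that, with probability at least $1 - 1/n$, all $X_j$ simultaneously lie in an interval of multiplicative width $1+\delta$; once this holds, I can set $b := 1 / \min_j X_j$ (or, more conveniently, $b := 1/((1-\eps)\mu)$ for the $\eps$ and $\mu$ chosen below).

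The first step is to identify the distribution of $X_j$ under the joint randomness of $A$ and the random stragglers. Since each node is a straggler independently with probability $p_t$, and each entry $A_{i,j}$ is an independent Bernoulli$(p_a)$ with $p_a = \ell/s$, the indicators $\mathbb{1}[i \in \calr,\, A_{i,j} = 1]$ are i.i.d.\ Bernoulli$(p_a(1-p_t))$ across $i \in [s]$. Hence $X_j \sim \text{Binomial}(s, p_a(1-p_t))$ with mean $\mu := \ell(1-p_t)$. Next, I would calibrate the deviation parameter $\eps$ so that the multiplicative gap $(1+\eps)/(1-\eps)$ equals $1+\delta$, which gives the convenient choice $\eps = \delta/(2+\delta)$.

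The core estimate is a two-sided multiplicative Chernoff bound, which yields
\[
\Pr\bigl[|X_j - \mu| > \eps\mu\bigr] \;\le\; 2 \exp(-\mu \eps^2 / 3).
\]
Substituting $\mu = \ell(1-p_t) = 6(2+\delta)^2 \delta^{-2} \log(\sqrt{2}\,n)$ makes the exponent equal to $2\log(\sqrt{2}\,n) = \log(2n^2)$, so the per-column failure probability is at most $1/n^2$. A union bound over the $n$ columns gives that, with probability at least $1 - 1/n$, every $X_j$ lies in $[(1-\eps)\mu,\, (1+\eps)\mu]$. On this event, setting $b_i = 1/((1-\eps)\mu) \ge 0$ for all $i \in \calr$ produces $\p{a} = \p{b}^T A_\calr$ with each coordinate in $[1,\, (1+\eps)/(1-\eps)] = [1,\, 1+\delta]$, establishing Property~\ref{prop:assignment} for the realized $\calr$.

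The argument is largely mechanical; the only point that needs care is matching constants. Specifically, one must pick $\eps$ to convert the symmetric Chernoff width into the asymmetric bound $[1, 1+\delta]$, and one must choose the prefactor in $\ell$ (the constant $6$) so that after cancellation against the $\eps^2$ and the $1/3$ in the Chernoff exponent, the bound lines up with $\log(\sqrt{2}n)$ to yield exactly $1/n^2$ per column and $1/n$ after the union bound. I do not anticipate a conceptual obstacle beyond this bookkeeping.
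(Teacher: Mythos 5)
Your proposal is correct and follows essentially the same route as the paper's own proof: the same uniform recovery vector $\p{b} = \frac{1}{(1-\gamma)\ell(1-p_t)}(1,\ldots,1)$, the same reduction to concentration of the column sums of $A_{\calr}$ viewed as Binomial$(s, p_a(1-p_t))$ variables, the same choice $\gamma = \delta/(2+\delta)$ to convert the symmetric Chernoff interval into $[1, 1+\delta]$, and the same per-column bound of $1/n^2$ followed by a union bound. The constant bookkeeping also matches exactly.
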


The two parameters of importance when constructing an assignment matrix are the \emph{load per machine} and the \emph{fraction of stragglers} that can be tolerated. 
Increasing the redundancy makes the assignment matrix robust to more stragglers while at the same time, increases the computational load on individual compute nodes. 
For $s=O(n)$, our construction assigns $O(\log n)$ data points to each compute node and is resilient to a constant fraction of random stragglers. 

%%%%%%%%%%%%%%%%%%%%%%%
\begin{figure}[t!]
    \centering
\subfigure[Ground-truth.]{
\includegraphics[scale=0.3]{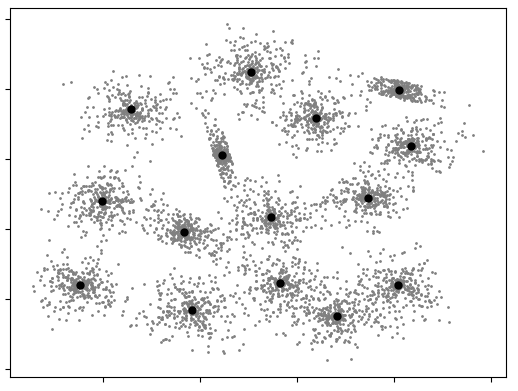}
\label{fig1}	
}%
\subfigure[No redundancy.]{
\includegraphics[scale=0.3]{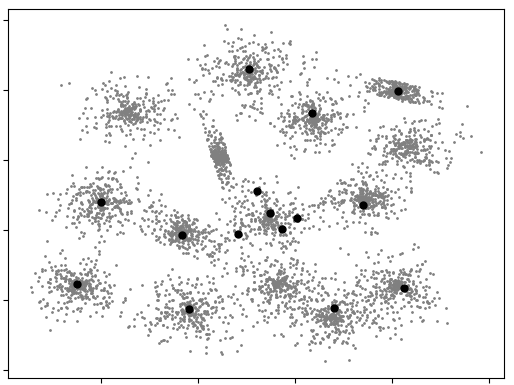}
\label{fig2}	
}%
\subfigure[$p_a = 0.1$.]{
\includegraphics[scale=0.3]{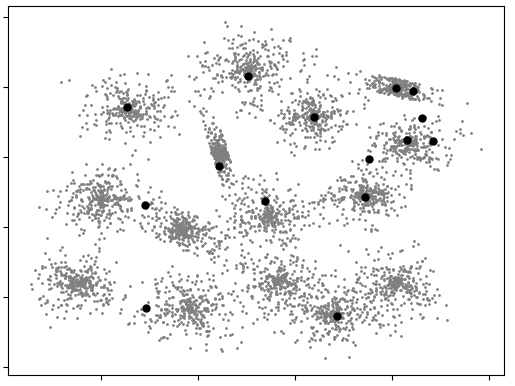}
\label{fig3}	
}%
\subfigure[$p_a = 0.2$.]{
\includegraphics[scale=0.3]{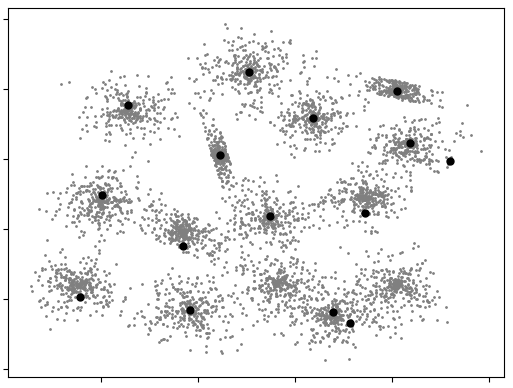}
\label{fig4}	
}%
\caption{Performance of the proposed straggler-resilient $k$-medians algorithm.}
\vspace{1em}
\end{figure}
%%%%%%%%%%%%%%%%%%%%%%%%%%%%

%%%%%%%%%%%%%%%%%%%%%%%%%%%%%%%%%%%%%%%%%%%%%%%%%%%%%%%%%%%%%%%%%%%%%%%%%%
%% Experiments
%%%%%%%%%%%%%%%%%%%%%%%%%%%%%%%%%%%%%%%%%%%%%%%%%%%%%%%%%%%%%%%%%%%%%%%%%%%

\section{Experiments}
\label{sec:experiments}
In this section, we demonstrate the performance of our straggler-resilient distributed $k$-medians algorithm  and compare it with non-redundant data assignment scheme. We consider the synthetic Gaussian data-set \cite{Ssets} with $n=5000$ two-dimensional points. The points are distributed on $s=10$ compute nodes, $t=3$ of which are randomly chosen to be stragglers. The results are presented in Figures \ref{fig1}-\ref{fig4}. 

Figures \ref{fig1} shows the ground truth $k=15$-median clustering, using the centroids provided in the data-set. Figures \ref{fig2} shows the results obtained by ignoring the local computations from the straggler nodes. We use Algorithm~\ref{fig:distributedClustering} without any redundant data assignment. The $5000$ data points are randomly partitioned among $10$ compute nodes. Each non-straggler compute node sends its local $k$-median centers to the coordinator. The coordinator then runs a $k$-median algorithm on the accumulated $k(s-t)$ centers obtained from the non-stragglers. As evident from the comparison between Figure~\ref{fig1} and Figure~\ref{fig2}, such a scheme can output a set of poor quality $k$-centers. 

Figure \ref{fig3} shows the result of Algorithm~\ref{fig:distributedClustering}. The assignment matrix is chosen randomly (see Section~\ref{subsec:construction} for details) with each $p_a:= Pr[A_{i,j} = 1] = 0.1$. Such an assignment matrix assigns $500$ data points to each compute node in expectation, leading to a non-redundant data assignment. Figure \ref{fig4} shows the effect of increasing $p_a$ to $0.2$, and hence the redundancy in the data assignment step. Each compute node now gets about $1000$ data points. Note that the results of Figure \ref{fig4} are very close to the ground truth clustering shown in Figure \ref{fig1}.

%%%%%%%%%%%%%%%%%%%%%%%%%%%%%%%%%%%%%%%%%%%%%%%%%%%%%%%%%%%%%%%%%%%%%%%%%%
%% Conclusion
%%%%%%%%%%%%%%%%%%%%%%%%%%%%%%%%%%%%%%%%%%%%%%%%%%%%%%%%%%%%%%%%%%%%%%%%%%%

\section{Conclusion and Future directions}
It is an interesting direction to explore the tradeoff between the communication and the approximation factor achieved by the clustering method. In the $k$-median algorithm described above, each compute node returns a set of $k$-centers to achieve an approximation factor of about $3$. Whereas, in Algorithm~\ref{fig:subspaceClustering}, the compute nodes send the coresets of their local data which can then be combined to construct a global coreset for the entire data. While, the quality of the obtained centers improves, it increases the communication cost between the compute nodes and the coordinator since a coreset would contain more than $k$ points. 

Another natural question that we are currently exploring is using data distribution techniques  to design distributed algorithms that are robust to byzantine adversaries. 
 
%%%%%%%%%%%%%%%%%%%%%%%%%%%%%%%%%%%%%%%%%%%%%%%%%%%%%%%%%%%%%%%%%%%%%%%%%%
%% Bibliography
%%%%%%%%%%%%%%%%%%%%%%%%%%%%%%%%%%%%%%%%%%%%%%%%%%%%%%%%%%%%%%%%%%%%%%%%%%%

\newpage
\bibliographystyle{unsrt}
\bibliography{Coded_computation}

%%%%%%%%%%%%%%%%%%%%%%%%%%%%%%%%%%%%%%%%%%%%%%%%%%%%%%%%%%%%%%%%%%%%%%%%%%
%% Appendix
%%%%%%%%%%%%%%%%%%%%%%%%%%%%%%%%%%%%%%%%%%%%%%%%%%%%%%%%%%%%%%%%%%%%%%%%%%%

\newpage
\clearpage
\appendix

\section{Missing proofs from Section~\ref{sec:background}}

\begin{proof}[Proof of Theorem~\ref{thm:coreset-clustering}]
Let $C^*$ and $C^*_S$ be the optimal sets of $k$-centers for the clustering problem with the dataset $P$ and the weighted dataset $(S, w)$ as inputs, respectively.  Let $C$ be an $\alpha$-approximate solution for the clustering problem with the weighted dataset $(S, w)$. 
Our goal is to show that 
\begin{align}
\label{eq:coreset-clustering-1}
\cost(P, C) \le \alpha(1+3\eps) \cdot \cost(P, C^*).
\end{align}
Note that
\begin{align}
\label{eq:coreset-clustering-2}
\cost(P, C^*) &\overset{(i)}{\ge} \frac{1}{1+\eps} \cdot \cost(S, C^*, w) \nonumber \\
&\overset{(ii)}{\ge} \frac{1}{1+\eps} \cdot \cost(S, C^*_S, w) \nonumber \\
&\overset{(iii)}{\ge} \frac{1}{1+\eps} \cdot \frac{1}{\alpha} \cdot \cost(S, C, w)  \nonumber \\
&\overset{(iv)}{\ge} \frac{1}{1+\eps} \cdot \frac{1}{\alpha} \cdot (1-\eps) \cdot \cost(P,C),
\end{align}
where $(i)$ and $(iv)$ follow from the fact that $(S, w)$ is an $\epsilon$-coreset for $P$ (cf.~Defintion~\ref{def:coreset}). $(ii)$ and $(iii)$ follow from the fact that  $C^*_S$ and $C$ correspond to the optimal and $\alpha$-approximate clustering for $(S, w)$, respectively. Now, for small enough $\epsilon$ (in particular, $0 \leq \epsilon \leq {1}/{3}$), (\ref{eq:coreset-clustering-1}) follows from (\ref{eq:coreset-clustering-2}).
\end{proof}

%%%%%%%%%%%%%%%%%%%%%%%%%%%%%%%%%%%%%%%%%%%%%%%%%%%%%%%%%%%%%%%%%%%%%%%%%%%%%%%%%%%%%%%%%%%%%%%%%%%%%%%%%%%%

\section{Missing proofs from Section~\ref{subsec:assignment}}

\begin{proof}[Proof of Lemma~\ref{lem:weightedcoreset}]
We prove the result for the $d^2(\cdot, \cdot)$ cost function, and the proof extends similarly to $d(\cdot, \cdot)$ as well. The proof is independent of the choice of the distance function, and we only use the properties of the assignment matrix. 

First note that, 
\begin{align}\label{eq:coresetweights}
\sum_{i \in \calr} b_i \text{cost} (P_i, C, w) &= \sum_{i \in \calr} b_i \sum_{\p{p} \in P_i} w(\p{p}) d^2(\p{p}, C) \nonumber\\
&= \sum_{i \in \calr} b_i \sum_{j \in [n]} A_{i, j} w(\p{p}_j)~d^2(\p{p}_j, C) \nonumber\\
&= \sum_{j \in [n]} w(\p{p}_j)~d^2(\p{p}_j, C)  \sum_{i \in \calr} b_i A_{i, j}. 
\end{align}
From Property~\ref{prop:assignment} we know that  for any $j \in [n]$, $\sum_{i \in \calr} b_i A_{i, j} \leq 1+\delta$. By combining this fact with \eqref{eq:coresetweights}, we obtain that
\begin{align*}
\sum_{i \in \calr} b_i \text{cost} (P_i, C, w)& = \sum_{j \in [n]} w(\p{p}_j)~d^2(\p{p}_j, C)  \sum_{i \in \calr} b_i A_{i, j}  \\
& \le (1+\delta) \sum_{j \in [n]} w(\p{p}_j)~d^2(\p{p}_j, C)\\
&= (1+\delta)\cdot \text{cost}(P, C, w)
\end{align*}
Similarly, Property~\ref{prop:assignment} ensures that for any $j \in [n]$, $\sum_{i \in \calr} b_i A_{i, j} \geq 1$. Utilizing this fact in \eqref{eq:coresetweights} gives us the desired lower bound as follows.
\begin{align*}
\sum_{i \in \calr} b_i \text{cost} (P_i, C, w) &= \sum_{j \in [n]} w(\p{p}_j)~d^2(\p{p}_j, C)  \sum_{i \in \calr} b_i A_{i, j}  \\
&\ge \sum_{j \in [n]}w(\p{p}_j)~d^2(\p{p}_j, C)\\
&=\text{cost}(P, C).
\end{align*}
\end{proof}

%%%%%%%%%%%%%%%%%%%%%%%%%%%%%%%%%%%%%%%%%%%%%%%%%%%%%%%%%%%%%%%%%%%%%%%%%%%%%%%%%%%%%%%%%%%%%%%%%%%%%%%%%%%%
\section{Missing proofs from Section~\ref{subsec:kmeds}}
\begin{proof}[Proof of Lemma~\ref{lem:kmedianscoreset}]
~\\
\noindent\textbf{Upper Bound.}~We first show that for any set of $k$-centers $C \subset \R^d$, and any $j \in [s]$, 
$\text{cost}(Y_j,  C, w_j) \le 2 \text{cost}(P_j, C)$. This ensures that the weighted $k$-centers $(Y_j, w_j)$ is a good representation of the partial data $P_j$. 
\begin{align}\label{eq:kmedub1}
\text{cost}(Y_j, C, w_j) &= \sum_{\p{y} \in Y_j} w_j(\p{y}) d(\p{y}, C) \nonumber\\
&= \sum_{\p{y} \in Y_j} \lvert \text{cluster}(\p{y}, P_j) \rvert ~d(\p{y}, C) \nonumber\\
&= \sum_{\p{y} \in Y_j} \sum_{\p{x} \in \text{cluster}(\p{y}, P_j)} d(\p{y}, C).  
\end{align}
For any $\p{x} \in \R^d$, let $C(\p{x})$ denote its closest center in $C$. It follows from \eqref{eq:kmedub1} that
\begin{align}
\label{eq:kmedub2}
\text{cost}&(Y_j, C, w_j) = \sum_{\p{y} \in Y_j} \sum_{\p{x} \in \text{cluster}(\p{y}, P_j)} d(\p{y}, C(\p{y})) \nonumber\\
&\overset{(i)}{\le}  \sum_{\p{y} \in Y_j} \sum_{\p{x} \in \text{cluster}(\p{y}, P_j)} d(\p{y}, C(\p{x})) \nonumber\\
&\overset{(ii)}{\le} \sum_{\p{y} \in Y_j} \sum_{\p{x} \in \text{cluster}(\p{y}, P_j)} \left( d(\p{x},\p{y})+d(\p{x}, C(\p{x}))\right) \nonumber\\ 
&=  \sum_{\p{y} \in Y_j} \sum_{\p{x} \in \text{cluster}(\p{y}, P_j)} d(\p{x},\p{y})+ \sum_{\p{x} \in P_j}  d(\p{x}, C(\p{x}))\nonumber \\
&= \text{cost}(P_j, Y_j) + \text{cost}(P_j, C) \nonumber\\
&\overset{(iii)}{\le} 2 ~\text{cost}(P_j, C),
\end{align}
where $(i)$ and $(ii)$ employ the definition of $C(\p{x})$ and the triangle inequality, respectively. Note that $(iii)$ follows from the optimality of the $k$-centers $Y_j$ on the partial  dataset $P_j$, i.e., $\cost(P_j, Y_j) \leq \cost(P_j, C)$. Next, note that
\begin{align}
\label{eq:lemma_iii_up}
&\cost(Y, C, w) = \sum_{j\in \calr} \text{cost}(Y_j, C, b_j\cdot w_j) \nonumber \\
&\qquad = \sum_{j\in \calr} b_j \text{cost}(Y_j, C, w_j) \nonumber \\
&\qquad \overset{(i)}{\le} 2 \sum_{j\in \calr} b_j ~\text{cost}(P_j, C) \overset{(ii)}{\le} 2(1+\delta)\text{cost}(P, C),% \quad {(Lemma~\ref{lem:weightedcoreset})}.
\end{align}
where $(i)$ and $(ii)$ follow from \eqref{eq:kmedub2} and Lemma~\ref{lem:weightedcoreset}, respectively.\\

\noindent \textbf{Lower Bound.}~To establish the lower bound, we start from Lemma~\ref{lem:weightedcoreset}. For any set of $k$-centers $C$, we have
\begin{align}\label{eq:kmedlb1}
\text{cost}(P, C) &\le \sum_{j \in \calr} b_j \text{cost} (P_j, C)\nonumber\\
&= \sum_{j \in \calr} b_j \sum_{\p{x} \in P_j} d(\p{x}, C(\p{x}))
\end{align}
Recall that $Y_j$ is the set of $k$-median centers for the data-set $P_j$. By the definition of cluster centers, we know that for any two points $\p{x}, \p{y} \in \R^d$, and any set of $k$-centers, $d(\p{x}, C(\p{x})) \le d(\p{x}, C(\p{y}))$. Plugging this observation in \eqref{eq:kmedlb1}, we obtain that
\begin{align}\label{eq:kmedlb2}
\text{cost}(P, C) &\le \sum_{j \in \calr} b_j \sum_{\p{x} \in P_j} d(\p{x}, C(\p{x})) \nonumber\\
&\le \sum_{j \in \calr} b_j \sum_{\p{x} \in P_j} d\big(\p{x}, C( Y_j (\p{x}))\big), 
\end{align}
where $Y_j(\p{x})$ is the cluster center in $Y_j$ that is closest to $\p{x} \in P_j$. Now using triangle inequality, we get that 
\begin{align}
\label{eq:lemma_iii_lower}
\text{cost}(P, C) &\le \sum_{j \in \calr} b_j \sum_{\p{x} \in P_j} d\big(\p{x}, C( Y_j (\p{x}))\big) \nonumber \\
&\overset{(i)}{\le} \sum_{j \in \calr} b_j \sum_{\p{x} \in P_j} \Big( d(\p{x}, Y_j (\p{x})) +  d\big(Y_j (\p{x}), C( Y_j (\p{x}))\big)  \Big) \nonumber  \\
&= \sum_{j \in \calr} b_j \text{cost}(P_j, Y_j) + \sum_{j \in \calr} b_j \sum_{\p{x} \in P_j}d(Y_j (\p{x}), C( Y_j (\p{x}))) \nonumber \\
&= \sum_{j \in \calr} b_j \text{cost}(P_j, Y_j) + \sum_{j \in \calr} b_j \sum_{\p{y} \in Y_j}\lvert \text{cluster}(\p{y}, P_j) \rvert d(\p{y}, C(\p{y})) \nonumber \\
&= \sum_{j \in \calr} b_j \text{cost}(P_j, Y_j)  + \sum_{j \in \calr} b_j \text{cost}(Y_j, C, w_j)\nonumber \\
&= \sum_{j \in \calr} b_j \text{cost}(P_j, Y_j)  + \sum_{j \in \calr} \text{cost}(Y_j, C, b_j\cdot w_j)\nonumber \\
&= \sum_{j \in \calr} b_j \text{cost}(P_j, Y_j)  +  \text{cost}(Y, C, w),
\end{align}
where $(i)$ employs the triangle inequality.

Note that Lemma~\ref{lem:kmedianscoreset} follows from \eqref{eq:lemma_iii_up} and \eqref{eq:lemma_iii_lower}. 
\end{proof}

%%%%%%%%%%%%%%%%%%%%%%%%%%%%%%%%%%%%%%%%%%%%%%%%%%%%%%%%%%%%%%%%%%%%%%%%%%%%%%%%%%%%%%%%%%%%%%%%%%%%%%%%%%%%
\section*{Missing Proofs from Section~\ref{subsec:pca}}
\begin{proof}[Proof of Lemma~\ref{lem:weightedcoreset}]
Note that, for any $i \in \calr$, the weighted point set $(S_i, w_i)$ is an $\eps$-coreset of the partial dataset $P_i$. Thus, according to Definition~\ref{def:coreset}, we have
\begin{align}\label{eq:coresetpartial}
(1-\eps)~{\rm cost}(P_i, C) \le ~{\rm cost}&(S_i, C, w_i)   \le (1+\eps)~{\rm cost}(P_i, C),
\end{align}
for any set of $k$-centers $C \subset \R^d$. 

For $S:= \cup_{i \in \calr} S_i$ and any set of $k$-centers $C$, we have
\begin{align}
\label{eq:thm_weighted_1}
\cost(S, C, w) &= \sum_{\p{c} \in S} w(\p{c}) d^2(\p{c}, C) \nonumber \\
&= \sum_{i \in \calr} b_i \sum_{\p{c} \in S_i} w_i(\p{c}) d^2(\p{c}, C)   \nonumber \\
&= \sum_{i \in \calr} b_i \cost(S_i, C, w_i). 
\end{align} 
By combining \eqref{eq:coresetpartial} and \eqref{eq:thm_weighted_1}, we obtain that
\begin{align*}
(1-\eps) \sum_{i \in \calr} b_i~{\rm cost}(P_i, C) 
& \le  \cost(S, C, w) \le (1+\eps) \sum_{i \in \calr} b_i ~{\rm cost}(P_i, C) \\
\end{align*} 
Now using Lemma~\ref{lem:weightedcoreset} on both sides of the above inequality, we obtain that
\begin{align}
\label{eq:thm_weighted_2}
\cost(S, C, w) &\ge (1-\eps) \sum_{i \in \calr} b_i~{\rm cost}(P_i, C) \nonumber \\
& \ge (1-\eps)\cost(P, C) \nonumber \\
&  \ge (1 - 2 \eps - 2\delta)\cost(P, C),
\end{align}
and
\begin{align}
\label{eq:thm_weighted_3}
\cost(S, C, w) &\le (1+\eps) \sum_{i \in \calr} b_i~{\rm cost}(P_i, C) \nonumber \\
&\le (1+\eps)(1+\delta)\cost(P, C) \nonumber \\
&\le (1+ 2 \eps + 2\delta)\cost(P, C).
\end{align}
Lemma~\ref{lem:weightedcoreset} follows from \eqref{eq:thm_weighted_2} and \eqref{eq:thm_weighted_3}.
\end{proof}

\begin{proof}[Proof of Lemma~\ref{lem:weightedRcoreset}]

From Algorithm~\ref{alg:pca}, note that
\begin{align*}
\cost(S, L, w) &= \sum_{j \in \calr} b_j \cost(S_j, L)\\
&=\sum_{j \in \calr} b_j \cost(\widehat{P}_j, L).
\end{align*}
The last equality follows from the observation that $\widehat{P}_j = U_j S_j$, where $U_j$ is an orthonormal matrix. Therefore, $\text{cost}(\widehat{P}_j, L) = \text{cost}(S_j, L)$ for any $r$-dimensional subspace $L$.

By invoking Lemma~\ref{lem:feldmancoreset}, we obtain that
\begin{align}
\label{eq:lemm_weighted_1}
\cost(S, L, w)&=\sum_{j \in \calr} b_j \cost(\widehat{P}_j, L) \nonumber \\
&\le \sum_{j \in \calr} b_j \left( (1+\delta)\cost(P_j, L) - \Delta_j \right) \nonumber \\
&= (1+\delta) \sum_{j \in \calr} b_j \cost(P_j, L) - \Delta \nonumber \\
&\overset{(i)}{\le} (1+\delta)^2 \cost(P, L) - \Delta \nonumber \\
& \le (1+ 4\delta) \cost(P, L) - \Delta,
\end{align}
where $(i)$ follows from Lemma~\ref{lem:weightedcoreset}. Similarly, again using Lemma~\ref{lem:weightedcoreset}, we obtain that
\begin{align}
\label{eq:lemm_weighted_2}
\cost(S, L, w)&=\sum_{j \in \calr} b_j \cost(\widehat{P}_j, L) \nonumber \\
&\ge \cost(P, L) - \Delta.
\end{align}
Note that Lemma~\ref{lem:weightedRcoreset} follows from \eqref{eq:lemm_weighted_1} and \eqref{eq:lemm_weighted_2}.
\end{proof}

%%%%%%%%%%%%%%%%%%%%%%%%%%%%%%%%%%%%%%%%%%%%%%%%%%%%%%%%%%%%%%%%%%%%%%%%%%%%%%%%%%%%%%%%%%%%%%%%%%%%%%%%%%%%
\section*{Missing Proofs from Section~\ref{subsec:construction}}

\begin{proof}[Proof of Theorem~\ref{thm:randommatrix}]
Let $\calr \subseteq [s]$ denote the set of non-stragglers, then for any $i \in [s]$, we have 
\begin{align}
\label{eq:w_random}
\Pr\{i \in \calr \} = 1 - p_t.
\end{align}
Next, we argue that for any $\delta > 0$, we can choose $p_a = \frac{\ell}{s}$ large enough to ensure Property~\ref{prop:assignment} with high probability. First, we analyze the weight of each of the column in the random matrix $A_{\calr}$. For $i \in [s]$ and $j \in [n]$, define an event $E_{i,j}$ as follows
\begin{align}
E_{i,j} = \begin{cases}
1 & \text{if $i \in \calr$ and $A_{i,j} = 1$,}\\
0 & \text{otherwise}.
\end{cases}
\end{align}
Note that for any fixed $j \in [n]$, $\big\{E_{i,j}\big\}_{i \in [s]}$ is a collection of $s$ independent events. Furthermore, it follows from \eqref{eq:B_random} and \eqref{eq:w_random} that 
\begin{align*}
\Pr\{E_{i,j} = 1\} = p_a(1 - p_t).
\end{align*}
Note that 
$$
\mathbb{E}\left[\sum_{i = 1}^{s}E_{i,j}\right] = s~p_{a}(1 - p_t) = {\ell}(1 - p_t).
$$
It then follows from standard Chernoff bound that for any $\gamma \in (0, 1)$,
\begin{align*}
\Pr&\Big\{\bigl\vert\sum_{i = 1}^{s}E_{i,j}  - \ell(1 - p_t)\bigr\vert \geq \gamma \cdot\ell(1 - p_t)\Big\} \nonumber\\
& \leq 2e^{- \frac{\gamma^2\cdot\ell(1 - p_t)}{3}}.
\end{align*}

In particular, if we choose $\gamma = \frac{\delta}{2  + \delta}$ and $\ell = \frac{6\log (\sqrt{2}n)}{\gamma^2\cdot(1 - p_t)}$, then with probability at least $1 - \frac{1}{n^2}$ the following holds for a given $j \in [n]$.
\begin{align*}
1 \leq \frac{1}{(1 - \gamma)\cdot\ell  (1 - p_t)} \sum_{i = 1}^{s}E_{i,j} \leq 1 + \delta.
\end{align*}
Now, taking a union bound over all $j \in [n]$, we have with probability at least $1 - \frac{1}{n}$,
\begin{align}
\label{eq:whp_event1}
1 \leq \frac{1}{(1 - \gamma)\cdot\ell  (1 - p_t)} \sum_{i = 1}^{s}E_{i,j} \leq 1 + \delta~~\forall~j \in [n].
\end{align}

Recall that in order to establish Property~\ref{prop:assignment}, we need to show that there exist a non-negative vector $\p{b} \in \R^{|\calr|}$ such that 
$$
\p{b}^TA_{\calr} = (a_1, a_2,\ldots, a_n), 
$$
where
$$
1 \leq a_j \leq 1 + \delta~~\text{for all}~j \in [n].
$$
We consider the vector 
$$
\p{b} = \frac{1}{(1- \gamma)\cdot\ell(1-p_t)}\cdot(1, 1,\ldots, 1)
$$
as a candidate. Note that for this choice of $\p{b}$, we have 
$$
\p{b}^TB_{\calr} = \frac{1}{(1 - \gamma)\cdot\ell(1- p_t)}\cdot\big(\sum_{i = 1}^{s}E_{i,1}, \ldots, \sum_{i = 1}^{s}E_{i,n}\big).
$$
It follows from \eqref{eq:whp_event1}, that with probability at least $1 - \frac{1}{n}$, each of the coordinates of $\p{b}^{T}B_{\calr}$ falls in the interval $[1, 1 + \delta]$. This completes the proof.
\end{proof}

\end{document}